\documentclass[conference]{IEEEtran}
\IEEEoverridecommandlockouts
\usepackage{amsmath,amssymb,amsthm,mathtools,bm}
\usepackage[T1]{fontenc}
\usepackage{enumitem}
\usepackage{graphicx}
\usepackage{booktabs}
\usepackage{multirow}
\usepackage{algorithm}
\usepackage{xcolor}
\usepackage{algorithmic}
\usepackage[colorlinks=true,linkcolor=blue!60!black,citecolor=blue!60!black,urlcolor=blue!60!black]{hyperref}

\renewcommand{\d}{\mathrm{d}}

\newtheorem{proposition}{Proposition}

\newtheorem{definition}{Definition}
\newtheorem{assumption}{Assumption}
\newtheorem{corollary}{Corollary}

\newcommand{\Ad}{\mathrm{Ad}}
\newcommand{\ad}{\mathrm{ad}}

\newcommand{\Tr}{\mathrm{Tr}}

\newcommand{\R}{\mathbb{R}}
\newcommand{\C}{\mathbb{C}}

\begin{document}
\title{
A Reservoir Computing Approach \\ to Quantum Gate Synthesis
}

\author{
\IEEEauthorblockN{\textsc{Francesco Caravelli}\textsuperscript{1},
\textsc{Roberto Menta}\textsuperscript{2,1},
\textsc{Antonio Sannia}\textsuperscript{3,1,4}} 
\vspace{0.2cm}
\IEEEauthorblockA{
\textsuperscript{1}\textit{Theoretical Division, Los Alamos National Laboratory, Los Alamos, NM 87544, USA} \\
\textsuperscript{2}\textit{NEST, Scuola Normale Superiore, I-56127 Pisa, Italy}\\
\textsuperscript{3}\textit{IFISC (UIB-CSIC), Campus Universitat de les Illes Balears, 07122 Palma de Mallorca, Spain}\\
\textsuperscript{4}\textit{USRA Research Institute for Advanced Computer Science (RIACS), USA}\\ [2pt]
\url{caravelli@lanl.gov}, \url{roberto.menta@sns.it}, \url{sannia@ifisc.uib-csic.es}}
}

\maketitle

\begin{abstract}
Quantum gate synthesis is essential for implementing quantum algorithms
on real hardware, yet existing methods are often computationally
demanding. Here, we introduce a novel approach based on reservoir
computing, which we name Group Reservoir Computing, an efficient machine-learning paradigm for learning
temporal dynamics whose training reduces to a single linear regression,
to reduce the resources
required. The method is grounded in the Wei--Norman decomposition, which provides a compact description of the evolution. We prove that the reconstructed dynamics always remain unitary by construction and derive formal error bounds that establish the theoretical validity of the strategy. On the standard single-qubit gate set the trained network produces a control pulse in a single pass, with mean fidelity $0.94$ across the eight benchmark gates; used to warm-start gradient-based optimization, it roughly halves the number of iterations that plain gradient ascent needs to reach a target fidelity, so that the relevant figure of merit is the time to reach that threshold rather than the final accuracy after a fixed budget. Owing to its general
formulation, the method applies to any finite-dimensional hardware
platform; the route to multiqubit synthesis is discussed in the closing
section.

\end{abstract}

\section{Introduction}
\label{sec:intro}

Every quantum algorithm ultimately runs as a sequence of gates, and
each gate has to be realized physically by driving the hardware with
carefully shaped control signals. What the experimenter has to work
with is partly fixed and partly adjustable: the device comes with an
intrinsic, always-on dynamics set by nature, together with a small
number of adjustable control fields: drive amplitudes, pulse shapes,
magnetic fields, or gate voltages. Quantum gate synthesis is the
problem of choosing the time profiles of these few controls so that the
system, evolving under them, reproduces a desired gate as accurately as
possible~\cite{Koch2022, Lloyd2014, Viola1999, Viola2009}.

Established methods such as \texttt{GRAPE}~\cite{KhanejaGRAPE},
Krotov~\cite{SklarzTannor2002}, and their descendants all require
simulating the full quantum evolution, at a cost of order $O(KN^3)$ per
iteration for $K$ time steps in Hilbert-space dimension $N$. There is a
clear need for faster, cheaper substitutes for the map from controls to
gate. The difficulty is that any such shortcut must still respect the
one property that cannot be given up: a quantum gate must be unitary.

Our approach is based on a classical result from Lie theory, the
Wei--Norman decomposition~\cite{WeiNorman1963,WeiNorman1964} (WN). It shows
that the complicated, matrix-valued evolution of the gate can be tracked
through just a few ordinary numbers, a set of time-dependent
coordinates that obey simple equations driven by the control fields.
This recasts a hard problem, learning an evolution that lives on the
curved space of unitary matrices, into a control problem of a nonlinear dynamical system. Input--output learning of this kind is
exactly what reservoir computing does well. A reservoir, here an
echo-state network~\cite{Jaeger2001,LukoseviciusJaeger2009}, is a
fixed, randomly wired dynamical system whose response to an input is
extracted by a single trained linear readout, so training reduces to one
linear regression rather than the delicate optimization of a deep
network. We let the reservoir learn how the coordinates evolve under
the controls, and then reassemble the gate from them. The essential
point is that this reassembly is a mathematical map that always lands on
the set of valid unitaries: even when the learned coordinates are
imprecise, the gate it returns is genuinely unitary by construction because of the WN decomposition. A learning error
can move the gate within the allowed space, but it can never push it
outside.

This is a fundamentally different use of learning from quantum reservoir
computing~\cite{FujiiNakajima2017,MartinezPena2021,MujaldArranz2021,
GhoshOpala2021, Sannia2024, Venturelli2026}, where a quantum system itself plays the role of the
reservoir, and it is used for learning or generating classical temporal data. Here, by contrast, the coherent gate itself
is the object we keep and reconstruct by using a classical dynamical system. A complementary approach was proposed in Ref.~\cite{Ghosh2021}, where an ancillary quantum network is employed to generate quantum gates on a smaller subsystem. In contrast, our approach leverages classical resources to reduce the required quantum ones. For this reason, it might be more appropriate to call this approach \textit{Group Reservoir Computing}, as it applies both to classical and quantum control problems.

The architecture therefore separates into two clean parts: a reservoir
absorbs the complicated, history-dependent dynamics, while a fixed
mathematical map enforces, exactly, and only at the end, that the
output is a valid gate. This separation is what makes the scheme
attractive for hardware. The reservoir can be built from a fast
analog~\cite{Appeltant2011}, photonic~\cite{Vandoorne2014}, or even
physical quantum~\cite{Fernando2003, menta2025building, aiudi2026} systems, while the gate is
reassembled by a lightweight digital step. Because validity is imposed
only at that last step, imperfections or noise in the reservoir can
lower the fidelity but, by construction, can never produce an unphysical
result. The same trained model can also
serve as an informed starting guess that speeds up conventional
optimizers.

Machine learning has been brought to quantum control in several
forms: reinforcement learning of control policies~\cite{NiuRL2019},
supervised networks that map gates directly to
pulses~\cite{Nikaeen2024}, physics-informed networks that build the
equations of motion into the training objective~\cite{PINN2025}, and
network architectures based on Lie-theoretic (Cartan) decompositions~\cite{KPQNN2025}---while the Wei--Norman decomposition has its
own history in the control engineering of bilinear and two-level
systems~\cite{Nihtila2010}, and in quantum control it has been used to analyze
controllability and to generate sequential unitary gates from the continuous-time
Schr\"odinger equation~\cite{Altafini2002,Altafini2003,Gen-Altafini2002,Altafini2005}. What sets the present work apart is the
object being learned: not the control pulse and not the gate, but the
intermediate Wei--Norman coordinates, from which the gate is then
recovered exactly. To our knowledge, combining the Wei--Norman
reduction with reservoir computing is itself new, and it is this pairing
that yields a surrogate which respects the group structure by
construction, trains with a single regression, and comes with provable
error bounds.

Our contributions are the following. (i)~We formulate this
structure-preserving framework for controlled dynamics on compact matrix
Lie groups and prove bounds that carry the error in the learned
coordinates over to the error in the gate. (ii)~We
introduce an \emph{inverse} reservoir that, given a target gate,
produces a control pulse in a single pass, using a frequency-domain
description of the pulse to resolve the fact that many different pulses
realize the same gate. (iii)~We show that where this one-shot method
fails, it fails for a geometric reason---the failures cluster at the
singular points of the coordinate system, not at any fundamental
obstruction. (iv)~We combine the two: the inverse reservoir supplies the starting
guess for a standard optimizer (\texttt{GRAPE}). With an accelerated,
line-search variant of the optimizer both warm and cold starts converge
within ten iterations on a single qubit; against plain gradient ascent,
whose cold-start convergence takes $10^2$--$10^3$ iterations, the
learned warm start roughly halves the iterations-to-threshold on seven
of the eight standard gates. We argue that this iterations-to-threshold
count, rather than the final accuracy after a fixed
budget~\cite{Caneva2009}, is the right way to judge a learned
initializer, and that its margin should widen with the system
dimension. 

\begin{figure*}
 \centering
 \includegraphics[width=1\linewidth]{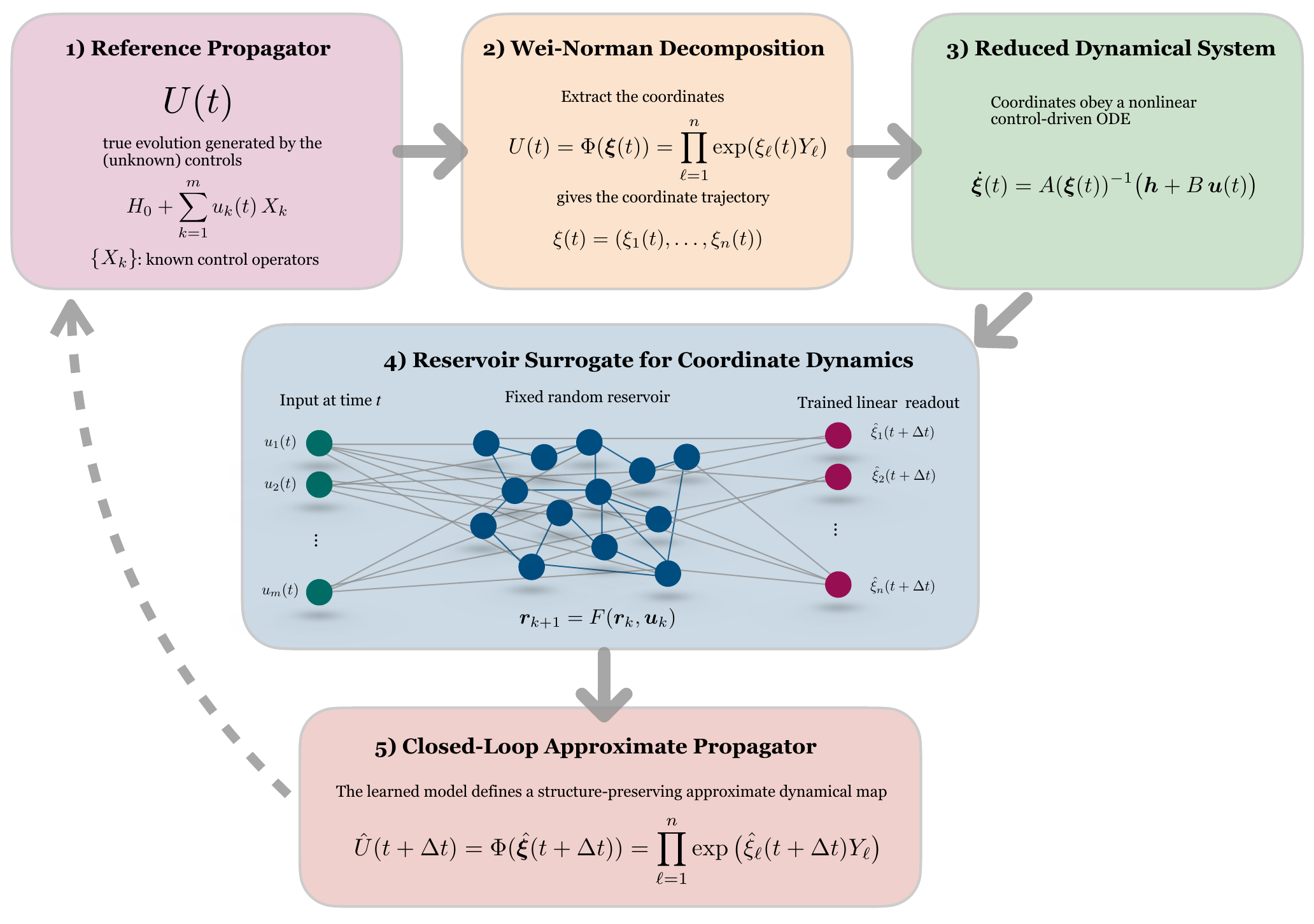}
\caption{Structure-preserving reservoir-computing architecture for
controlled quantum dynamics and gate synthesis. (1) Starting from the
controlled propagator $U(t)$, generated by a Hamiltonian
$H(t)=H_0+\sum_k u_k(t) X_k$, (2) the evolution is expressed in Wei--Norman
coordinates, $U(t)=\Phi(\bm\xi(t))=\prod_{\ell} \exp(\xi_\ell(t)Y_\ell)$. (3) This yields a reduced nonlinear control system for the coordinates,
$\dot{\bm \xi}(t)=A(\bm \xi(t))^{-1}(\bm{h}+B\,\bm{u}(t))$, which is approximated by a
reservoir surrogate driven by the control fields $u_k(t)$. (4,5) The trained
linear readout predicts the coordinate evolution $\hat{\bm\xi}(t)$, and the
propagator is reconstructed by the Wei--Norman product
$\hat U(t)=\Phi(\hat{\bm\xi}(t))$. Because reconstruction is performed
directly on the Lie group, the learned approximation remains
structure-preserving: $\hat U(t)$ belongs to the correct unitary group
even when the surrogate is imperfect.}
\label{fig:fig1}
\end{figure*}

The rest of the paper is organized as follows. Section~\ref{sec:WN}
develops the Wei--Norman reduction, introduces the echo-state surrogate,
and proves the error bounds. Section~\ref{sec:results} presents the
numerical study: we first validate the forward surrogate on the rotation
group $SO(3)$, a transparent classical testbed that shares the same
underlying structure as a single qubit, and then develop the three
single-qubit synthesis strategies (a forward reservoir with gradient
optimization, the single-pass inverse reservoir, and the hybrid
reservoir-plus-\texttt{GRAPE} pipeline) on the Hadamard and other
standard gates. Section~\ref{sec:discussion} closes with extensions and
open problems.

\section{Wei--Norman reduction \\ and reservoir surrogates}
\label{sec:WN}

We start by showing how a controlled evolution
equation on a matrix Lie group can be reduced, via the Wei--Norman
decomposition, to a nonlinear ODE for a finite set of scalar
coordinates. We begin with the necessary Lie-theoretic background,
introduce the Wei--Norman ansatz, derive the resulting coordinate ODE,
and then specialize to the controlled setting that arises in quantum
control.

Let $G$ be a matrix Lie group with Lie algebra
$\mathfrak g\subset\C^{N\times N}$. For any $X,Y\in\mathfrak g$, the Lie bracket
is the matrix commutator $[X,Y]=XY-YX$, whose non-vanishing is the sole
source of difficulty in what follows: were the generators to commute,
the propagator would factor trivially. We recall that the Wei--Norman
factorization introduced below is global for solvable Lie algebras but
only local in the semisimple case relevant here; the coordinate
singularities encountered in Sec.~\ref{sec:WN_SO3_SU2} are the price of
this locality. The exponential map
$\exp:\mathfrak g\to G$ sends a generator $X$ to the finite
transformation $e^X=\sum_{k\ge0}X^k/k!$. Conjugation by a group element
defines the adjoint action $\Ad_g(Y)=gYg^{-1}$, a linear map
$\mathfrak g\to\mathfrak g$; its infinitesimal counterpart is $\ad_X(Y)=[X,Y]$, the derivative of $\Ad_{e^{tX}}$ at $t=0$.
The two are related by $\Ad_{e^X}=e^{\ad_X}$, equivalently the Hadamard
formula
\begin{equation}
 e^X\,Y\,e^{-X}
 = \sum_{k=0}^\infty \frac{1}{k!}\,(\ad_X)^k(Y).
 \label{eq:Hadamard}
\end{equation}

\subsection{Wei--Norman reduction of controlled dynamics}

Now suppose $\dim\mathfrak g=n$ and fix an ordered basis
$(Y_1,\dots,Y_n)$ of $\mathfrak g$. Consider the matrix initial-value
problem:
\begin{equation}
 \dot U(t)=M(t)\,U(t),
 \quad U(0)=I,
 \quad M(t)=\sum_{j=1}^n a_j(t)\,Y_j,
 \label{eq:WN_evolution}
\end{equation}
where the coefficient functions $a_j(t)$ are given. The solution
$U(t)$ lies in $G$ for all~$t$, but computing it directly requires
integrating an $N\times N$ matrix ODE. The Wei--Norman
method~\cite{WeiNorman1963,WeiNorman1964} replaces this operator-valued
problem by a system of $n$ scalar ODEs through the ansatz
\begin{equation}
 U(t)
 = e^{\xi_1(t)Y_1}\,e^{\xi_2(t)Y_2}\cdots e^{\xi_n(t)Y_n},
 \qquad \xi_\ell(0)=0.
 \label{eq:WN_ansatz}
\end{equation}
The ansatz~\eqref{eq:WN_ansatz} is exact but only shifts the task: we
must now find the equations of motion obeyed by the scalar coordinates
$\xi_\ell(t)$. These follow by differentiating the product and matching
against~\eqref{eq:WN_evolution}. The key observation is that, since
$U(t)\in G$, the right logarithmic derivative $\dot U U^{-1}$ is itself
an element of the Lie algebra $\mathfrak g$, and can therefore be
expanded in the basis $(Y_1,\dots,Y_n)$; the following identity makes
that expansion explicit.

\begin{proposition}[Wei--Norman identity {\cite{WeiNorman1963}}]
\label{prop:WN}
For $U(t)$ of the form~\eqref{eq:WN_ansatz},
\begin{equation}
 M(t)
 = \dot U\,U^{-1}
 = \sum_{\ell=1}^n \dot\xi_\ell(t)
 \left(\prod_{k<\ell} e^{\xi_k(t)\,\ad_{Y_k}}\right)Y_\ell.
 \label{eq:WN_identity}
\end{equation}
\end{proposition}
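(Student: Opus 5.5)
The plan is to differentiate the Wei--Norman product~\eqref{eq:WN_ansatz} with the Leibniz rule and then move every exponential sitting to the right of a derivative over to the left by conjugation. The only facts needed are $\frac{d}{dt}e^{\xi(t)Y}=\dot\xi(t)\,Y e^{\xi(t)Y}$, which holds because $Y$ commutes with $e^{\xi Y}$, and the Hadamard formula~\eqref{eq:Hadamard}, i.e.\ $\Ad_{e^X}=e^{\ad_X}$.

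Write $P_\ell(t)=e^{\xi_1Y_1}\cdots e^{\xi_\ell Y_\ell}$, with $P_0=I$ and $P_n=U$. Differentiating the product term by term gives
\begin{equation*}
\dot U=\sum_{\ell=1}^n P_{\ell-1}\,\dot\xi_\ell Y_\ell\, e^{\xi_\ell Y_\ell}\,e^{\xi_{\ell+1}Y_{\ell+1}}\cdots e^{\xi_nY_n}.
\end{equation*}
Inserting $P_{\ell-1}^{-1}P_{\ell-1}$ after $Y_\ell$, the $\ell$-th summand becomes $\dot\xi_\ell\,P_{\ell-1}Y_\ell P_{\ell-1}^{-1}\,U$. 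Multiplying on the right by $U^{-1}$, which exists since $U(t)\in G$, yields
\begin{equation*}
\dot U U^{-1}=\sum_{\ell=1}^n\dot\xi_\ell\,\Ad_{P_{\ell-1}}(Y_\ell).
\end{equation*}

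Next I would use that $\Ad$ is a homomorphism, $\Ad_{gh}=\Ad_g\Ad_h$. This gives $\Ad_{P_{\ell-1}}=\Ad_{e^{\xi_1Y_1}}\cdots\Ad_{e^{\xi_{\ell-1}Y_{\ell-1}}}$, and applying~\eqref{eq:Hadamard} factor by factor turns it into $\prod_{k<\ell}e^{\xi_k\ad_{Y_k}}$. The product is ordered with $k=1$ leftmost, which is how the product in~\eqref{eq:WN_identity} should be read. Finally, $M=\dot UU^{-1}$ holds by~\eqref{eq:WN_evolution}, because the ansatz is assumed to represent the solution $U$ of that equation.

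There is no real obstacle here, but two points need care. The first is the ordering convention of the noncommuting product of operators $e^{\xi_k\ad_{Y_k}}$. The second is that the scalar $\dot\xi_\ell$ can be pulled out even though the $Y_k$ do not commute. Both are handled by the bookkeeping above. I would also remark that each term lies in $\mathfrak g$, since $\Ad_g$ preserves $\mathfrak g$, which is consistent with the expansion.
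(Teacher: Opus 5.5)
Your proposal is correct and follows essentially the same route as the paper's proof. You use Leibniz differentiation of the product, cancel the trailing factors against $U^{-1}$ to obtain $\sum_\ell\dot\xi_\ell\,\Ad_{E_1\cdots E_{\ell-1}}(Y_\ell)$ (your insertion of $P_{\ell-1}^{-1}P_{\ell-1}$ is only a different bookkeeping of that cancellation), and then rewrite via $\Ad_{gh}=\Ad_g\circ\Ad_h$ and $\Ad_{e^X}=e^{\ad_X}$.
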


\begin{proof}
Write $U=E_1\cdots E_n$ with $E_\ell=e^{\xi_\ell Y_\ell}$. Since
$Y_\ell$ is constant, $\dot E_\ell=\dot\xi_\ell\,Y_\ell\,E_\ell$, and
differentiating the product gives
$\dot U=\sum_\ell E_1\cdots E_{\ell-1}\,\dot E_\ell\,E_{\ell+1}\cdots E_n$.
Right-multiplying by $U^{-1}=E_n^{-1}\cdots E_1^{-1}$, the factors to the
right of $\dot E_\ell$ cancel and $\dot E_\ell E_\ell^{-1}
=\dot\xi_\ell Y_\ell$, leaving
$\dot U\,U^{-1}=\sum_\ell\dot\xi_\ell\,
\Ad_{E_1\cdots E_{\ell-1}}(Y_\ell)$.
Using $\Ad_{gh}=\Ad_g\circ\Ad_h$ and $\Ad_{e^X}=e^{\ad_X}$ to rewrite
each $\Ad_{E_k}=e^{\xi_k\ad_{Y_k}}$ completes the proof.
\end{proof}

Equation~\eqref{eq:WN_identity} is the heart of the reduction. Each
coordinate velocity $\dot\xi_\ell$ enters through its generator $Y_\ell$
\emph{dressed} by the conjugations $\prod_{k<\ell}e^{\xi_k\ad_{Y_k}}$
induced by all the preceding factors: the first generator ($\ell=1$) is
undressed, while each later generator carries the accumulated action of
the earlier coordinates, evaluated through the Hadamard
series~\eqref{eq:Hadamard}. Were the $Y_\ell$ to commute, every
conjugation would collapse to the identity and the
coordinates would simply integrate the coefficients; all the nontrivial
$\bm\xi$-dependence of the reduction is thus a direct consequence of the
noncommutativity of $\mathfrak g$.

Expanding each conjugated generator in the fixed basis defines the
$n\times n$ Wei--Norman matrix $A(\bm{\xi})$ through
\begin{equation}
 \left(\prod_{k<\ell} e^{\xi_k\,\ad_{Y_k}}\right)Y_\ell
 = \sum_{j=1}^n A_{j\ell}(\bm{\xi})\,Y_j,
\end{equation}
so that $\bm{a}(t) = A(\bm{\xi}(t))\,\dot{\bm{\xi}}(t)$. When
$A(\bm{\xi})$ is invertible, the Wei--Norman ODE is
\begin{equation}
 \dot{\bm{\xi}}(t)
 = A(\bm{\xi}(t))^{-1}\,\bm{a}(t),
 \qquad \bm{\xi}(0)=\bm{0}.
 \label{eq:WN_ODE}
\end{equation}
The right-hand side depends nonlinearly on $\bm{\xi}$ through $A^{-1}$
but only linearly on the driving signal $\bm{a}(t)$. 

We now specialize to the controlled setting. A controlled quantum
system evolves as
\begin{equation}
 \dot U(t)
 = -\frac{i}{\hbar}\Big(H_0+\sum_{k=1}^m u_k(t)\,X_k\Big)U(t),
 \qquad U(0)=I,
 \label{eq:schrodinger}
\end{equation}
where $H_0$ is the drift Hamiltonian, the $u_k(t)$ are control
fields and $X_k$ the corresponding control operators; we set $\hbar=1$
henceforth. Writing $M_0:=-iH_0$ and $M_k:=-iX_k$,
the generator in~\eqref{eq:WN_evolution} is affine in the controls,
\begin{equation}
 M(t)=M_0+\sum_{k=1}^m u_k(t)\,M_k ,
 \label{eq:M_affine}
\end{equation}
and every term lies in $\mathfrak g$. Expanding the drift and each
control generator in the fixed basis $(Y_1,\dots,Y_n)$ defines a constant
offset vector $\bm{h}\in\R^n$ and a constant matrix $B\in\R^{n\times m}$,
\begin{equation}
 M_0=\sum_{j=1}^n h_j\,Y_j,
 \qquad
 M_k=\sum_{j=1}^n B_{jk}\,Y_j ,
 \label{eq:hB_def}
\end{equation}
i.e.\ $\bm{h}$ collects the coordinates of the drift and the $k$-th
column of $B$ collects the coordinates of the $k$-th control generator;
both are time-independent because $H_0$ and the $X_k$ are. Substituting
\eqref{eq:hB_def} into \eqref{eq:M_affine} and collecting the coefficient
of each $Y_j$ gives the coordinate vector of $M(t)$,
\begin{equation}
 a_j(t)=h_j+\sum_{k=1}^m B_{jk}\,u_k(t),
 \label{eq:a_decomp}
\end{equation}
i.e.~$\bm{a}(t)=\bm{h}+B\,\bm{u}(t)$, 
so the drift enters as the constant offset $\bm{h}$ and the controls
enter linearly through $B$. The Wei--Norman ODE then takes the form
\begin{equation}
 \dot{\bm{\xi}}(t)
 = A(\bm{\xi}(t))^{-1}\big(\bm{h}+B\,\bm{u}(t)\big)
 =: f(\bm{\xi}(t),\bm{u}(t)).
 \label{eq:xi_control}
\end{equation}

\subsection{Reservoir surrogates}
\label{sec:reservoir}

Reservoir computing provides a general framework for learning
input--output maps generated by driven dynamical systems. The idea is
to replace the direct construction of a nonlinear model by a two-layer
architecture: a high-dimensional \emph{reservoir} with fixed internal
dynamics, and a simple trained \emph{readout} that extracts the desired
output from the reservoir state. The reservoir may be recurrent,
nonlinear, and random, but it is not trained; only the readout is
optimized. This makes reservoir computing attractive when the
underlying dynamics are complicated but the target map is causal and
regular enough to be captured through a rich state embedding.

In the present setting, the object of interest is not the group-valued
propagator $U(t)$ directly, but the Wei--Norman coordinate trajectory
$\bm{\xi}(t)$ associated with it. Once the propagator has been reduced
to the coordinate dynamics
\begin{equation}
\dot{\bm{\xi}}(t)=f(\bm{\xi}(t),\bm{u}(t)),
\end{equation}
the task becomes learning the control-to-coordinate map. More
precisely, the Wei--Norman equation~\eqref{eq:xi_control} induces a
causal input--output operator $\mathcal{H}:\bm{u}|_{[0,t]}\longmapsto \bm{\xi}(t)$, associating to each admissible control history its coordinate state at
time $t$. We restrict attention to a fixed finite horizon $[0,T]$, with
bounded controls and trajectories confined to a compact set on which
$A(\bm{\xi})$ is invertible; there $\mathcal{H}$ is continuous, and we approximate it empirically
by the leaky-integrator ESN introduced below, with the
resulting surrogate error controlled explicitly in
Proposition~\ref{prop:coord_error}. 

Among reservoir models, ESNs are especially convenient: the nonlinear
recurrence encodes the recent input history, while training reduces to
fitting a linear readout. In our case the readout target is the
Wei--Norman coordinate trajectory itself, and the gate is reconstructed
only after the coordinate prediction has been made. We use a
leaky-integrator ESN with reservoir state $\bm{r}_k = (r_{k,1},\ldots , r_{k,R})\in\mathbb{R}^R$
updated according to
\begin{align}
 &\bm{r}_{k+1}
 = (1-\alpha)\,\bm{r}_k
 + \alpha\,\tanh\!\bigl(W_r \bm{r}_k
 + W_{\mathrm{in}} \bm{u}_k + \bm{b}\bigr), \nonumber \\
 &\bm{r}_0=\bm{0}.
 \label{eq:ESN_update}
\end{align}
Here $W_r$ is a sparse random recurrent matrix scaled to have spectral
radius $\rho<1$, a standard prescription for the echo-state property
(for the leaky-integrator update the relevant contraction condition
involves the effective matrix $(1-\alpha)I+\alpha W_r$),
$W_{\mathrm{in}}$ is the input coupling matrix,
$\bm{b}$ is a bias, and $\alpha\in(0,1]$ is the leak rate, which sets
the effective memory timescale of the network. 

To improve expressivity while keeping the readout linear in its
parameters, we form an augmented state
\begin{equation}
\bm{s}_k=(\bm{r}_k,\bm{r}_k^{\circ 2}\big|_{R_q},\bm{u}_k,1),
\end{equation}
where $\bm{r}_k^{\circ 2}\big|_{R_q} := (r_{k,1}^2,\ldots , r_{k,R_q}^2)$, $R_q\le R$, denotes the componentwise square of the
first $R_q$ reservoir components ($R_q=100$ in all experiments below). The readout then predicts the Wei--Norman coordinates
through
\begin{equation}
 \hat{\bm{\xi}}_k
 = \bm{\xi}_0 + W_{\mathrm{out}}\,\bm{s}_k,
 \qquad
 \hat U_k = \Phi(\hat{\bm{\xi}}_k)\in G ,
 \label{eq:readout}
\end{equation}
where $W_{\mathrm{out}}\in\mathbb{R}^{n\times(R+R_q+m+1)}$ is the trained
readout matrix and $\bm{\xi}_0$ is the (known) initial coordinate. Here
$\bm{\xi}_0=\bm{0}$ by Eq.~\eqref{eq:WN_ansatz}.\footnote{In the
numerical experiments of Sec.~\ref{sec:results} the Wei--Norman
integration is instead initialized at $\bm{\xi}_0=(0,\pi/2,0)$, inside
the chart and away from the singularity at $\xi_2=0$. The realized
physical propagator is then $\Phi(\bm{\xi}(T))\,\Phi(\bm{\xi}_0)^{-1}$,
and a physical target $U_f$ is supplied to the synthesis pipeline as the
ZYZ coordinates of $U_f\,\Phi(\bm{\xi}_0)$.}
Thus the ESN produces a surrogate coordinate trajectory
$\hat{\bm{\xi}}_k$, and the approximate propagator is obtained by
reconstruction on the group; the map $\Phi$ enforces the group
constraint exactly, regardless of the quality of the coordinate
approximation. 

\begin{definition}[Structure-preserving surrogate]
\label{def:SP}
A surrogate is said to be \emph{structure-preserving at the
reconstruction level} if
\begin{equation}
\hat U_k=\Phi(\hat{\bm{\xi}}_k)\in G
\end{equation}
for all time indices $k$ and for all trained parameter values.
\end{definition}

The readout matrix $W_{\mathrm{out}}$ is fitted by ridge regression.
Let $S$ collect the augmented states $\bm{s}_k$ over the training set
and $\Xi$ the corresponding target coordinate displacements obtained
from numerical integration of~\eqref{eq:WN_ODE}. Then
\begin{equation}
W_{\mathrm{out}}
= \Xi S^\top(SS^\top+\lambda I)^{-1},
\end{equation}
with ridge parameter $\lambda>0$. Training thus reduces to a single
linear solve, with all nonlinear processing delegated to the fixed
reservoir dynamics.

We now state the assumptions under which the surrogate error can be
controlled.

\begin{assumption}
\label{ass:main}
Fix a time interval $[0,T]$ and a class of admissible controls
$\mathcal{V}$.
\begin{enumerate}
\item[(i)] The vector field $f(\bm{\xi},\bm{u})$ is Lipschitz in
$\bm{\xi}$ with constant $L$, uniformly over $\mathcal{V}$.
\item[(ii)] Both the exact and approximate trajectories remain in a
compact set $K$ on which $A(\bm{\xi})$ is invertible.
\item[(iii)] The surrogate vector field satisfies
\begin{equation}
\sup_{K\times\mathcal{V}}\|f-\hat f\|\le \varepsilon.
\end{equation}
\end{enumerate}
\end{assumption}

Here $\hat f$ denotes the continuous-time interpolation of the trained
discrete-time surrogate, with the time-discretization error of the
integrator absorbed into $\varepsilon$.

Under these assumptions, the coordinate error satisfies a standard
stability estimate.

\begin{proposition}[Coordinate error]
\label{prop:coord_error}
Under Assumption~\ref{ass:main},
\begin{equation}
\sup_{t\in[0,T]}
\|\bm{\xi}(t)-\hat{\bm{\xi}}(t)\|
\le
e^{LT}
\bigl(
\|\bm{\xi}(0)-\hat{\bm{\xi}}(0)\|
+\varepsilon T
\bigr).
\end{equation}
\end{proposition}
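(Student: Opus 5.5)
This is a Grönwall argument on the integral form of the two ODEs. The exact trajectory satisfies $\dot{\bm\xi}=f(\bm\xi,\bm u)$ and the surrogate satisfies $\dot{\hat{\bm\xi}}=\hat f(\hat{\bm\xi},\bm u)$, both driven by the same admissible control $\bm u\in\mathcal V$. By Assumption~\ref{ass:main}(ii), both trajectories stay in $K$, so every vector field evaluation below happens where the hypotheses apply. Set $e(t)=\bm\xi(t)-\hat{\bm\xi}(t)$ and write
\begin{equation}
e(t)=e(0)+\int_0^t\bigl[f(\bm\xi(s),\bm u(s))-\hat f(\hat{\bm\xi}(s),\bm u(s))\bigr]\,\d s .
\end{equation}

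The key step is to split the integrand by adding and subtracting $f(\hat{\bm\xi}(s),\bm u(s))$. The first piece, $f(\bm\xi,\bm u)-f(\hat{\bm\xi},\bm u)$, is bounded in norm by $L\|e(s)\|$ using Assumption~\ref{ass:main}(i). This needs the Lipschitz bound to hold uniformly in $\bm u$, which (i) provides. The second piece, $f(\hat{\bm\xi},\bm u)-\hat f(\hat{\bm\xi},\bm u)$, is evaluated at $(\hat{\bm\xi}(s),\bm u(s))\in K\times\mathcal V$, so it is bounded by $\varepsilon$ using (iii). Taking norms gives
\begin{equation}
\|e(t)\|\le \|e(0)\|+\varepsilon t+L\int_0^t\|e(s)\|\,\d s .
\end{equation}

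Next I apply the integral Grönwall inequality. The forcing term $\alpha(t)=\|e(0)\|+\varepsilon t$ is nondecreasing, which yields $\|e(t)\|\le(\|e(0)\|+\varepsilon t)e^{Lt}$. Since both $t\mapsto\varepsilon t$ and $t\mapsto e^{Lt}$ are increasing, this is at most $e^{LT}(\|e(0)\|+\varepsilon T)$ for every $t\in[0,T]$. Taking the supremum over $t$ gives the stated bound.

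The only delicate point is regularity: Grönwall requires $\|e\|$ to be continuous and the integral form to be valid. For the surrogate, $\hat f$ is the continuous-time interpolation of the discrete readout, so $\hat{\bm\xi}$ should be taken absolutely continuous, solving its ODE in the Carathéodory sense for measurable bounded controls. The exact $\bm\xi$ is well defined because $A(\bm\xi)$ is invertible on $K$, by (ii). I would state this interpretation explicitly and attribute any discretization mismatch to $\varepsilon$, as the text after the assumption already does. With that in place, the rest is routine.
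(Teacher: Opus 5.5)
Your proposal is correct and follows the paper's proof essentially step for step. Both arguments add and subtract $f(\hat{\bm\xi},\bm u)$, derive $\|e(t)\|\le\|e(0)\|+\int_0^t(L\|e(s)\|+\varepsilon)\,\d s$, and apply Gr\"onwall with the nondecreasing forcing term $\|e(0)\|+\varepsilon t$, followed by monotonicity in $t$. Your remarks on regularity (absolute continuity of $\hat{\bm\xi}$, Carath\'eodory solutions, discretization error absorbed into $\varepsilon$) make explicit what the paper leaves implicit.
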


\begin{proof}
Let $e(t)=\bm{\xi}(t)-\hat{\bm{\xi}}(t)$. The exact and approximate
trajectories satisfy
\begin{equation}
\dot{\bm{\xi}}(t)=f(\bm{\xi}(t),\bm{u}(t)),
\qquad
\dot{\hat{\bm{\xi}}}(t)=\hat f(\hat{\bm{\xi}}(t),\bm{u}(t)).
\end{equation}
Subtracting these and adding and subtracting
$f(\hat{\bm{\xi}}(t),\bm{u}(t))$ gives
\begin{align}
\dot e(t)
&=
\bigl[f(\bm{\xi}(t),\bm{u}(t))-f(\hat{\bm{\xi}}(t),\bm{u}(t))\bigr] \nonumber
\\ &+
\bigl[f(\hat{\bm{\xi}}(t),\bm{u}(t))-\hat f(\hat{\bm{\xi}}(t),\bm{u}(t))\bigr].
\end{align}
Writing $e(t)=e(0)+\int_0^t \dot e(s)\,\d s$, taking norms, and using
Assumption~\ref{ass:main} gives
\begin{equation}
\|e(t)\|
\le
\|e(0)\|
+
\int_0^t \bigl(L\|e(s)\|+\varepsilon\bigr)\,\d s.
\end{equation}
Gronwall's inequality then implies
\begin{equation}
\|e(t)\|
\le
e^{Lt}\bigl(\|e(0)\|+\varepsilon t\bigr)
\le
e^{LT}\bigl(\|e(0)\|+\varepsilon T\bigr).
\end{equation}
Taking the supremum over $t\in[0,T]$ proves the claim.
\end{proof}

The coordinate error transfers to the propagator because, for compact
$G$ with anti-Hermitian (or antisymmetric) generators, the
reconstruction map $\Phi$ is globally Lipschitz\footnote{Here
$\|\cdot\|$ denotes the Euclidean ($\ell^2$) norm on vectors and the
Frobenius norm $\|A\|=\sqrt{\Tr(A^\dagger A)}$ on matrices; the latter
is just the Euclidean norm of the matrix entries, so the two are the
same $\ell^2$ norm and we use a single symbol. The constant $L_\Phi$ is
the Lipschitz constant of the reconstruction map $\Phi$, i.e.\ the
smallest number with
$\|\Phi(\bm{\xi})-\Phi(\bm{\eta})\|\le L_\Phi\,\|\bm{\xi}-\bm{\eta}\|$
for all $\bm{\xi},\bm{\eta}$; it measures how strongly an error in the
coordinates is amplified into an error in the reconstructed
propagator.}: each factor $e^{\xi_\ell Y_\ell}$ is unitary, so a change
in coordinates produces a controlled change in $U$.

\begin{corollary}[Propagator error]
\label{cor:prop_error}
Let $G$ be compact with anti-Hermitian generators $Y_\ell$. Then the
reconstruction map $\Phi$ is Lipschitz with
\begin{equation}
L_\Phi \le \sqrt{n}\,\max_{\ell}\|Y_\ell\|,
\end{equation}
and therefore
\begin{equation}
\sup_t \|U(t)-\hat U(t)\|
\le
L_\Phi\, e^{LT}
\bigl(
\|\bm{\xi}(0)-\hat{\bm{\xi}}(0)\|
+\varepsilon T
\bigr).
\end{equation}
For $\mathfrak{su}(2)$ with the basis $Y_\ell=\tfrac{i}{2}\sigma_{a_\ell}$
used below, $\|Y_\ell\|=1/\sqrt{2}$ and hence
$L_\Phi\le \sqrt{3/2}\approx 1.22$.
\end{corollary}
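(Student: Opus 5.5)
The argument has two parts. First we bound the Lipschitz constant of $\Phi(\bm\xi)=e^{\xi_1Y_1}\cdots e^{\xi_nY_n}$ by a telescoping product estimate. Then we compose this with Proposition~\ref{prop:coord_error}.

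For the Lipschitz bound, fix $\bm\xi,\bm\eta$ and write $E_\ell=e^{\xi_\ell Y_\ell}$ and $F_\ell=e^{\eta_\ell Y_\ell}$. Telescope:
\begin{equation}
\Phi(\bm\xi)-\Phi(\bm\eta)=\sum_{\ell=1}^n E_1\cdots E_{\ell-1}\,(E_\ell-F_\ell)\,F_{\ell+1}\cdots F_n .
\end{equation}
Since each $Y_\ell$ is anti-Hermitian, every $E_k$ and $F_k$ is unitary. The Frobenius norm is invariant under left and right multiplication by unitaries, so
\begin{equation}
\|\Phi(\bm\xi)-\Phi(\bm\eta)\|\le\sum_\ell\|E_\ell-F_\ell\|.
\end{equation}
For a single factor,
\begin{equation}
E_\ell-F_\ell=\int_{\eta_\ell}^{\xi_\ell}Y_\ell e^{sY_\ell}\,\d s .
\end{equation}
Again by unitary invariance, $\|Y_\ell e^{sY_\ell}\|=\|Y_\ell\|$, which gives $\|E_\ell-F_\ell\|\le\|Y_\ell\|\,|\xi_\ell-\eta_\ell|$. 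Hence
\begin{equation}
\|\Phi(\bm\xi)-\Phi(\bm\eta)\|\le\max_\ell\|Y_\ell\|\sum_\ell|\xi_\ell-\eta_\ell|\le\sqrt n\,\max_\ell\|Y_\ell\|\,\|\bm\xi-\bm\eta\|_2,
\end{equation}
where the last step is Cauchy--Schwarz ($\ell^1\le\sqrt n\,\ell^2$). This is the claimed bound on $L_\Phi$. It holds globally on $\R^n$, because unitarity does not depend on the chart. Invertibility of $A$ plays no role here; it only enters through the coordinate estimate.

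For the propagator bound, on the set where Assumption~\ref{ass:main} holds we have $U(t)=\Phi(\bm\xi(t))$ and $\hat U(t)=\Phi(\hat{\bm\xi}(t))$. The Lipschitz estimate gives
\begin{equation}
\|U(t)-\hat U(t)\|\le L_\Phi\|\bm\xi(t)-\hat{\bm\xi}(t)\|.
\end{equation}
Taking the supremum over $t$ and inserting Proposition~\ref{prop:coord_error} yields the stated inequality.

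For $\mathfrak{su}(2)$, take $Y_\ell=\tfrac i2\sigma_a$. Then $\|Y_\ell\|^2=\tfrac14\Tr(\sigma_a^2)=\tfrac12$, so $\|Y_\ell\|=1/\sqrt2$, and with $n=3$ we get $L_\Phi\le\sqrt{3}/\sqrt{2}=\sqrt{3/2}$.

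The only step that needs care is the norm bookkeeping in the Lipschitz estimate. The unitary-invariance argument must be applied to the Frobenius norm, which is unitarily invariant on both sides, and not to an arbitrary norm. The $\sqrt n$ factor comes solely from converting the $\ell^1$ sum to the Euclidean norm used in Proposition~\ref{prop:coord_error}.
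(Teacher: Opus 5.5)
Your proposal is correct and follows essentially the same route as the paper's proof. Both telescope by changing one factor at a time, use unitary invariance of the Frobenius norm, bound each factor via the fundamental theorem of calculus, pass from $\ell^1$ to $\ell^2$ with Cauchy--Schwarz to get the $\sqrt{n}$, and then compose with Proposition~\ref{prop:coord_error}; your $\mathfrak{su}(2)$ computation $\|\tfrac{i}{2}\sigma_a\|^2=\tfrac14\Tr(\sigma_a^2)=\tfrac12$ also matches the stated constant.
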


\begin{proof}
Write $\Phi(\bm{\xi})=e^{\xi_1Y_1}\cdots e^{\xi_nY_n}$. For two
coordinate vectors $\bm{\xi},\bm{\eta}$, insert and subtract
intermediate products in which one factor at a time is changed from
$e^{\eta_\ell Y_\ell}$ to $e^{\xi_\ell Y_\ell}$. Since multiplication on
the left and right by unitary matrices preserves the Frobenius norm,
\begin{equation}
\|\Phi(\bm{\xi})-\Phi(\bm{\eta})\|
\le
\sum_{\ell=1}^n
\|e^{\xi_\ell Y_\ell}-e^{\eta_\ell Y_\ell}\|.
\end{equation}
Applying the fundamental theorem of calculus to
$s\mapsto e^{(\eta_\ell+s(\xi_\ell-\eta_\ell))Y_\ell}$ gives
$\|e^{\xi_\ell Y_\ell}-e^{\eta_\ell Y_\ell}\|\le
|\xi_\ell-\eta_\ell|\,\|Y_\ell\|$, so that
\begin{align}
\|\Phi(\bm{\xi})-\Phi(\bm{\eta})\|
\le
\sum_{\ell=1}^n |\xi_\ell-\eta_\ell|\,\|Y_\ell\| \nonumber \\
\le
\sqrt{n}\,\max_\ell \|Y_\ell\|\,\|\bm{\xi}-\bm{\eta}\|.
\end{align}
Applying this with $U(t)=\Phi(\bm{\xi}(t))$ and
$\hat U(t)=\Phi(\hat{\bm{\xi}}(t))$ and invoking
Proposition~\ref{prop:coord_error} gives the stated bound.
\end{proof}

For $SU(2)$ the bound also controls the figure of merit used in
Sec.~\ref{sec:results}: with $\mathcal F(U,V)=|\Tr(U^\dagger V)|^2/4$
one has $1-\mathcal F(U,\hat U)\le \|U-\hat U\|^2/2$, so the propagator
error bounds the gate infidelity directly.

Together, Proposition~\ref{prop:coord_error} and
Corollary~\ref{cor:prop_error} make the division of labor between the two methods clear: while the
reservoir needs only to approximate the reduced coordinate dynamics, the exact group membership is supplied analytically by the Wei--Norman map. 

\subsection{Wei--Norman coordinates for $SO(3)$ and $SU(2)$, and the single-qubit control problem}
\label{sec:WN_SO3_SU2}

We now specialize the general construction to the two three-dimensional
groups most relevant here: the rotation group $SO(3)$ and the
single-qubit unitary group $SU(2)$. Their Lie algebras are isomorphic,
so they share the same coordinate structure at the Wei--Norman level;
$SO(3)$ serves as a classical, geometric prototype, while $SU(2)$ is the
physically relevant setting for one-qubit gate synthesis.

Let $Y_1,Y_2,Y_3$ be an ordered set of generators of the Lie algebra
and write the Wei--Norman factorization
\begin{equation}
U(t)=e^{\xi_1(t)Y_1}e^{\xi_2(t)Y_2}e^{\xi_3(t)Y_3}.
\label{eq:WN_factorization_special}
\end{equation}
By the construction of Sec.~\ref{sec:WN}, the coordinates obey
$\dot{\bm{\xi}}=A(\bm{\xi})^{-1}(\bm h+B\,\bm u)$, with the controls entering
through the affine vector $\bm a=\bm h+B\bm u$.

\paragraph*{The $SO(3)$ prototype}
For $SO(3)$, we adopt the standard ZYZ factorization
\begin{equation}
R(t)=e^{\phi(t)J_z}e^{\theta(t)J_y}e^{\psi(t)J_z},
\label{eq:SO3_ZYZ}
\end{equation}
where $J_x,J_y,J_z\in\mathfrak{so}(3)$ generate rotations about the
coordinate axes. In this chart the Wei--Norman coordinates are the
Euler angles $\bm{\xi}(t)=(\phi(t),\theta(t),\psi(t))^\top$, and the
Wei--Norman matrix is
\begin{align}
&A(\phi,\theta)=
\begin{pmatrix}
0 & -\sin\phi & \cos\phi\,\sin\theta \\
0 & \cos\phi & \sin\phi\,\sin\theta \\
1 & 0 & \cos\theta
\end{pmatrix}, \\
&\det A(\phi,\theta)=-\sin\theta.
\label{eq:A_SO3}
\end{align}
The chart is therefore singular at $\theta=0,\pi$, the familiar
gimbal-lock singularity of ZYZ Euler angles. Away from this set the
reduced dynamics form a well-defined three-dimensional nonlinear control
system, and the $SO(3)$ example serves as a classical testbed with the
same Wei--Norman structure as the single qubit but an intuitive
interpretation in terms of rigid rotations. For later use, solving
$A(\phi,\theta)\dot{\bm{\xi}}=a=(a_1,a_2,a_3)^\top$ gives the explicit
inverse
\begin{align}
\dot\phi &= a_3-\cot\theta\bigl(\cos\phi\,a_1+\sin\phi\,a_2\bigr),
 \label{eq:phi_dot_SO3}\\
\dot\theta &= -\sin\phi\,a_1+\cos\phi\,a_2,
 \label{eq:theta_dot_SO3}\\
\dot\psi &= \csc\theta\bigl(\cos\phi\,a_1+\sin\phi\,a_2\bigr).
 \label{eq:psi_dot_SO3}
\end{align}
Even when the coefficients $a_j(t)$ depend linearly on the controls, the
coordinate dynamics are nonlinear, through the trigonometric dependence
introduced by the inversion of $A(\phi,\theta)$.

\paragraph*{The $SU(2)$ single-qubit case}
We now pass to one-qubit gate synthesis. The single-qubit propagator
evolves in $SU(2)$, and we use the ZYZ factorization
\begin{equation}
U(t)=e^{i\xi_1(t)\sigma_z/2}\,
 e^{i\xi_2(t)\sigma_y/2}\,
 e^{i\xi_3(t)\sigma_z/2},
\label{eq:SU2_ZYZ}
\end{equation}
where $\sigma_x,\sigma_y,\sigma_z$ are the Pauli matrices. The
generators, ordered to match the ZYZ product, are
\begin{equation}
Y_1=\tfrac{i}{2}\sigma_z,\qquad
Y_2=\tfrac{i}{2}\sigma_y,\qquad
Y_3=\tfrac{i}{2}\sigma_z.
\end{equation}
The factorization generators need not be linearly independent---here
$Y_1=Y_3$---since the ordered set is only required to generate the
algebra~\cite{WeiNorman1964}; the matrix $A(\bm{\xi})$ is defined by
expanding in the basis
$\{\tfrac{i}{2}\sigma_x,\tfrac{i}{2}\sigma_y,\tfrac{i}{2}\sigma_z\}$ as
in Sec.~\ref{sec:WN}.
These are anti-Hermitian, as are the $\mathfrak{so}(3)$ generators
$J_a$; the two differ only by a sign convention in the bracket,
$[Y_a,Y_b]=-\varepsilon_{abc}Y_c$ here versus
$[J_a,J_b]=+\varepsilon_{abc}J_c$ for $\mathfrak{so}(3)$. The
Wei--Norman matrix is therefore the $SO(3)$ matrix of
Eq.~\eqref{eq:A_SO3} with its off-diagonal signs flipped (and
$\phi\mapsto\xi_1$, $\theta\mapsto\xi_2$):
\begin{align}
&A(\xi_1,\xi_2)=
\begin{pmatrix}
0 & \sin\xi_1 & -\cos\xi_1\sin\xi_2 \\
0 & \cos\xi_1 & \sin\xi_1\sin\xi_2 \\
1 & 0 & \cos\xi_2
\end{pmatrix}, \\
& \det A(\xi_1,\xi_2)=\sin\xi_2.
\label{eq:A_SU2}
\end{align}

The same coordinate singularity appears at $\xi_2=0,\pi$; it is a
singularity of the chosen chart, not of the gate dynamics. The
corresponding inverse relations are
\begin{align}
\dot\xi_1 &= a_3+\cot\xi_2\bigl(\cos\xi_1\,a_1-\sin\xi_1\,a_2\bigr),
 \label{eq:xi1_dot_SU2}\\
\dot\xi_2 &= \sin\xi_1\,a_1+\cos\xi_1\,a_2,
 \label{eq:xi2_dot_SU2}\\
\dot\xi_3 &= \csc\xi_2\bigl(\sin\xi_1\,a_2-\cos\xi_1\,a_1\bigr).
 \label{eq:xi3_dot_SU2}
\end{align}
These are the single-qubit analogue of the Euler-angle kinematics, but
the reconstruction map~\eqref{eq:SU2_ZYZ} now returns a genuine quantum
propagator in $SU(2)$.

\paragraph*{Analog one-qubit control}
Since our interest is analog gate synthesis for a single qubit, we take
the standard drift-plus-control Hamiltonian
\begin{equation}
H(t)=H_0+\sum_{k=1}^{m}u_k(t)\,X_k,
\label{eq:H_single_qubit_general}
\end{equation}
and, following the numerical experiments,
\begin{equation}
H(t)
=
\tfrac{1}{2}\sigma_z
+
\tfrac{1}{2}\bigl(
u_x(t)\sigma_x+u_y(t)\sigma_y+u_z(t)\sigma_z
\bigr),
\label{eq:H_single_qubit_xyz}
\end{equation}
so that $\dot U(t)=-iH(t)\,U(t)$, $U(0)=I$. Absorbing the
factor $-i$ into the basis and coefficients gives a control vector
$\bm{a}(t)=\bm{h}+B\,\bm{u}(t)$ with $\bm{u}(t)=(u_x,u_y,u_z)^\top$;
for this Hamiltonian and the generators above, $\bm{h}=(0,0,-1)^\top$
and $B=-I_3$, i.e.\
$\bm{a}(t)=-\bigl(u_x(t),\,u_y(t),\,1+u_z(t)\bigr)^\top$. The reduced
Wei--Norman dynamics are then
\begin{equation}
\dot{\bm{\xi}}(t)
=
A(\xi_1(t),\xi_2(t))^{-1}\bigl(\bm{h}+B\,\bm{u}(t)\bigr).
\label{eq:xi_control_SU2_final}
\end{equation}
This is the equation we analyze in the sequel: a three-dimensional
nonlinear control system for $\bm{\xi}(t)=(\xi_1,\xi_2,\xi_3)^\top$,
driven by the analog control fields. Once $\hat{\bm{\xi}}(t)$ has been
predicted by the surrogate, the gate is recovered through the exact
reconstruction
\begin{equation}
\hat U(t)
=
e^{i\hat\xi_1(t)\sigma_z/2}\,
e^{i\hat\xi_2(t)\sigma_y/2}\,
e^{i\hat\xi_3(t)\sigma_z/2}.
\label{eq:Uhat_reconstruction}
\end{equation}
This is the problem we target: given a desired gate $U_f\in SU(2)$, find
analog pulses $u_x(t),u_y(t),u_z(t)$ such that the solution
of~\eqref{eq:xi_control_SU2_final}, reconstructed
through~\eqref{eq:Uhat_reconstruction}, approximates $U_f$ with high
fidelity.

\section{Numerical results}
\label{sec:results}

We validate the framework in two stages. First, we test the forward
surrogate---the ESN that learns the control-to-coordinate map---on
$SO(3)$ Euler-angle dynamics, where the ground truth is transparent and
the hyperparameters can be tuned in isolation. Second, we turn to
single-qubit gate synthesis on $SU(2)$ and develop three strategies:
gradient optimization through the forward surrogate, a single-pass
inverse ESN, and a hybrid that uses the inverse ESN to warm-start
\texttt{GRAPE}. Code for all experiments is available (see Data
Availability). 

\subsection{Forward surrogate validation on $SO(3)$}
\label{sec:forward}

The forward ESN of Sec.~\ref{sec:reservoir} is trained to predict the
Euler-angle trajectory $\bm{\xi}(t)=(\phi,\theta,\psi)$ of
Eq.~\eqref{eq:SO3_ZYZ} from the driving signal, with targets generated
by direct integration of the Wei--Norman ODE. Each driving
signal is a sum of two sinusoids per axis, with frequencies drawn
uniformly from $[0.5,2.0]$, amplitudes drawn uniformly from $[0.2,0.8]$
and scaled by an overall factor $0.8$, and random phases; initial
conditions are sampled near $(0,\pi/2,0)$ with small uniform jitter, and
$\theta$ is kept within $[0.2,\pi-0.2]$ by clipping the coordinate at
each integration step. We sweep the reservoir
hyperparameters around a baseline ($R=1000$, $\rho=0.95$, $\alpha=0.1$,
$\sigma_{\mathrm{in}}=1.0$, $\lambda=10^{-8}$, 150 training
trajectories); the full sweep data are collected in
Appendix~\ref{app:sweep} (Table~\ref{tab:sweeps}). Two findings stand
out. First, the input scaling $\sigma_{\mathrm{in}}$ is by far the most
sensitive parameter (Fig.~\ref{fig:sweep_is}): reducing it to $0.1$
raises the mean reconstruction fidelity to $0.999$, while
$\sigma_{\mathrm{in}}=3.0$ degrades it to $0.833$. Second, small leak
rates ($\alpha\approx 0.01$) and moderate reservoir sizes outperform
larger, more reactive configurations; we attribute the degradation at
large $R$ to the fixed ridge parameter $\lambda$, which
under-regularizes the correspondingly larger readout problem. All sweep
metrics refer to a single reservoir realization evaluated on a single
held-out trajectory, which also accounts for the residual
non-monotonicity in Table~\ref{tab:sweeps}. Both observations point to
the same conclusion: away from chart singularities the Wei--Norman dynamics are
only mildly nonlinear, so the reservoir performs best in the near-linear
regime of its $\tanh$ activation. Reconstruction fidelities, defined for rotations as
$\mathcal F(R,\hat R)=[\Tr(R^\top\hat R)+1]/4$, above $0.99$ on the
held-out trajectory are obtained throughout the well-tuned region.

\subsection{Gate synthesis on $SU(2)$}
\label{sec:synthesis}

We now turn from forward prediction to the inverse problem: given a
target gate $U_f\in SU(2)$, find control pulses $\bm{u}(t)$ that realize
it. We develop three strategies of increasing sophistication and
report their performance on the Hadamard gate and other elements of the
standard universal gate set.

The Hadamard $U_{\mathrm H}=\frac{1}{\sqrt{2}}\bigl(\begin{smallmatrix}1&1\\1&-1
\end{smallmatrix}\bigr)$ has ZYZ decomposition
$\bm{\xi}_f=(\pi,\pi/2,0)^\top$ with fidelity $\mathcal{F}=1$ to machine
precision (up to a global phase, $\Phi(\bm{\xi}_f)=iU_{\mathrm H}$). The physical
Hamiltonian is $H(t)=\tfrac{1}{2}\sigma_z+\tfrac{1}{2}[u_x(t)\sigma_x
+u_y(t)\sigma_y+u_z(t)\sigma_z]$, with $K=100$ time steps at
$\Delta t=0.02$. 

\subsubsection*{Strategy 1: Forward ESN with gradient optimization}
The forward ESN is trained on $SU(2)$ dynamics driven by random Fourier
pulses whose coefficients are drawn with standard deviation
$\mathrm{amp}/\sqrt{n_{\mathrm{modes}}+1}$, $\mathrm{amp}=1.8$, with the
resulting pulses clipped to $|u_d|\le 5$, so as to cover coordinate
displacements of order $\pi$. Control pulses are then
optimized by gradient descent on
$\mathcal{L}=\|\hat{\bm{\xi}}(T)-\bm{\xi}_f\|^2$, with gradients
computed by analytic backpropagation through the fixed reservoir
recurrence. The loss is chart-based and does not identify coordinates
related by the periodic and ZYZ redundancies of the parameterization, so
some equivalent minima are penalized; optimizing $1-\mathcal F$ directly
through the reconstruction is a natural refinement that we leave to
future work. From DC-biased sinusoidal initializations the fidelity
climbs from $\mathcal{F}\approx 0.22$ to $0.91$ by iteration~80 and
reaches $0.993$ by iteration~460, improving slowly thereafter; the
dominant bottleneck is the $\xi_2$ coordinate, which must stay near
$\pi/2$ while $\xi_1$ traverses a distance of $\pi$. Optimization takes
minutes on a laptop, with the ESN providing structure-preserving output
at every iteration. This route is correct but slow, which motivates the
single-pass inverse map below.

\subsubsection*{Strategy 2: Inverse ESN (single-pass synthesis)}
Rather than optimizing through the forward ESN, one can train a separate
\emph{inverse} ESN to map target gate coordinates directly to control
pulses. The difficulty is that the map
$\bm{\xi}_f\mapsto\bm{u}(\cdot)$ is one-to-many: many pulse shapes
produce the same gate, and an ESN trained to reproduce time-domain pulse
samples averages over them and fails. We resolve this with a
\emph{Fourier pulse parameterization}: each control channel is a
truncated Fourier series
\begin{equation}
 u_d(t) = c_{d,0} + \sum_{k=1}^{n_{\mathrm{modes}}}
 \big[c_{d,k}\cos(2\pi k t/T) + s_{d,k}\sin(2\pi k t/T)\big],
 \label{eq:fourier_pulse}
\end{equation}

giving $3(2n_{\mathrm{modes}}+1)$ parameters per pulse ($51$ for
$n_{\mathrm{modes}}=8$). Training pairs are generated by sampling random
Fourier coefficients, integrating the Wei--Norman ODE to obtain
$\bm{\xi}(T)$, and storing $(\bm{\xi}(T),\text{Fourier params})$.
Because the map from Fourier coefficients to pulses is injective, the
only residual degeneracy is the physical one---distinct pulses realizing
the same gate---which the training distribution regularizes but does not
remove. 

Throughout, synthesized gates are evaluated physically. The
Wei--Norman integration is initialized at $\bm{\xi}_0=(0,\pi/2,0)$,
inside the chart, so the propagator realized by a pulse is
$U(T)=\Phi(\bm{\xi}(T))\,\Phi(\bm{\xi}_0)^{-1}$; accordingly, a
physical target $U_f$ is supplied to the network as the ZYZ coordinates
of $U_f\,\Phi(\bm{\xi}_0)$, and fidelities are computed through
$\mathcal{F}(U,V)=|\Tr(U^\dagger V)|^2/4$ between $U_f$ and the realized
propagator. We verified that this reconstruction agrees with direct
integration of the Schr\"odinger equation under
Eq.~\eqref{eq:H_single_qubit_xyz} to better than $10^{-8}$ on random
pulses, so all fidelities quoted below are physical. The full
single-pass pipeline is therefore
\begin{eqnarray}
&\bm{\xi}_{\mathrm{target}}
\longmapsto \text{ESN state}
\longmapsto \bm{\theta}_{\mathrm{F}}
\longmapsto \bm{u}(t)\nonumber \\
&\hspace{1cm}
\longmapsto \bm{\xi}(T)
\longmapsto \Phi(\bm{\xi}(T))\,\Phi(\bm{\xi}_0)^{-1},
\label{eq:pipeline}
\end{eqnarray}
where $\bm{\theta}_{\mathrm{F}}$ collects the Fourier coefficients of the
three channels. The inverse ESN is driven by
$[\bm{\xi}_{\mathrm{target}},\,k/K_{\mathrm{res}}]$ for $K_{\mathrm{res}}=80$ reservoir steps, and its
final-step readout predicts all $51$ coefficients; the run parameters are
listed in Table~\ref{tab:summary}. Note that the input to the inverse
reservoir---constant target coordinates plus a time ramp---differs in
nature from the forward driving signals, so its input scale was set
independently of the $SO(3)$ sweep. The $8000$ training pairs use a
deliberately mixed distribution---$30\%$ near standard gates
(Gaussian coordinate perturbations of the benchmark ZYZ coordinates with
width $0.3$), $30\%$ Haar-distributed ZYZ targets ($\xi_1,\xi_3\sim
U[-\pi,\pi)$, $\cos\xi_2\sim U[-1,1]$, with $\xi_2$ subsequently clipped
to $[0.15,\pi-0.15]$ to avoid the chart singularity), and $40\%$
unconstrained random Fourier pulses---so that
the network sees both the benchmark gates and the broader geometry of
$SU(2)$.

One further ingredient is needed at synthesis time. The ZYZ chart is
multivalued: shifting $\xi_1$ or $\xi_3$ by $2\pi$ changes the
reconstructed gate only by a global sign, so a physical target admits
several coordinate representatives, and the network generalizes well
only for representatives inside the densely sampled region of the
training distribution. We therefore evaluate the network on all
$2\pi$-wrapped representatives of the target (at most nine candidates,
each verified by a millisecond-scale integration of the reduced ODE)
and keep the best. The selection matters: for the $X$ gate it raises
the single-pass fidelity from $0.010$, obtained at the representative
$(-2\pi,\pi/2,-\pi)$ returned by the numerical decomposition, to
$0.9992$ at the equivalent representative $(0,\pi/2,-\pi)$. 

On $300$ held-out test gates the inverse ESN reaches mean fidelity
$0.943$ and median fidelity $0.994$, with $83\%$ of gates above
$\mathcal{F}=0.95$ (full breakdown in
Table~\ref{tab:app_general_test}). The distribution is sharply peaked
near unit fidelity with a thin low-fidelity tail
(Fig.~\ref{fig:inv_fids}): the principal challenge is not typical-case
accuracy but rare, large failures. A representative single-pass
synthesis is shown in Fig.~\ref{fig:inv_traj}---the Hadamard trajectory
approaches the target coordinates and the fidelity grows monotonically,
though the gate is not reproduced exactly before refinement.

On the eight standard gates the single pass attains mean fidelity
$0.938$, with five of the eight above $0.99$
(Table~\ref{tab:hybrid}). The weakest cases are a direct signature of
the coordinate singularity rather than a failure of learning. The
reference frame $\Phi(\bm{\xi}_0)$ maps each physical gate to the
coordinates of $U_f\,\Phi(\bm{\xi}_0)$ and thereby displaces the
singular set across the gate family: in this frame $Z$ and $S$---the
problematic targets of the bare chart---land on the well-conditioned
equator $\xi_2=\pi/2$ and are synthesized at fidelity $0.992$ and
$0.999$, while the Hadamard and $\sqrt{Y}$ are pushed to the chart
boundary $\xi_2=\pi-0.15$ and drop to $0.92$ and $0.79$. The failure
follows the chart position of the target, not the identity of the gate.
For the Hadamard specifically, the single pass yields coordinates
$(-0.75,\,2.64,\,-4.17)$ against the target $(0,\,\pi-0.15,\,-\pi)$,
i.e.\ fidelity $0.921$, in well under a second including the
representative search. This is already a useful warm start: the
limitation is geometric, not algorithmic, and we address it next.

\begin{table}[h]
\centering
\caption{Configuration of the inverse-ESN synthesis experiment.}
\label{tab:summary}
\smallskip
\begin{tabular}{lc}
\toprule
Quantity & Value \\
\midrule
Time steps / pulse & $K=100$ \\
Time step & $\Delta t=0.02$ \\
Pulse duration & $T=2.0$ \\
Fourier modes / channel & $8$ \\
Fourier coefficients / pulse & $51$ \\
Training / test pairs & $8000$ / $300$ \\
Reservoir size & $R=1500$ \\
Spectral radius & $\rho=0.9$ \\
Leak rate & $\alpha=0.03$ \\
Input scale & $\sigma_{\mathrm{in}}=1.0$ \\
Ridge parameter & $\lambda=10^{-7}$ \\
Quadratic readout features & $100$ \\
Reservoir drive length & $K_{\mathrm{res}}=80$ \\
WN initial coordinate & $\bm{\xi}_0=(0,\pi/2,0)$ \\
\bottomrule
\end{tabular}
\end{table}

\begin{figure}[!t]
\centering
\includegraphics[width=\columnwidth]{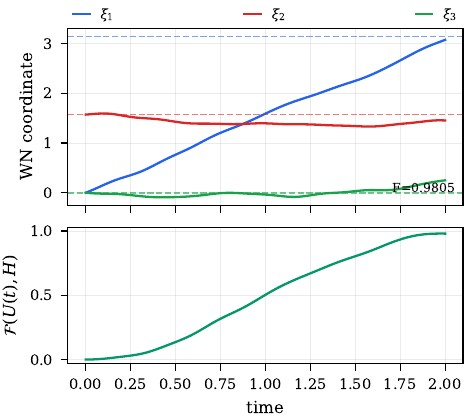}
\caption{Representative single-pass inverse-ESN synthesis of the
Hadamard gate. Top: the Wei--Norman coordinate trajectory approaches
the target coordinates but does not terminate exactly on them. Bottom:
the gate fidelity along the trajectory, confirming that the gate is
synthesized well---though not perfectly---before any gradient
refinement.}
\label{fig:inv_traj}
\end{figure}

\begin{figure}[!t]
\centering
\includegraphics[width=\columnwidth]{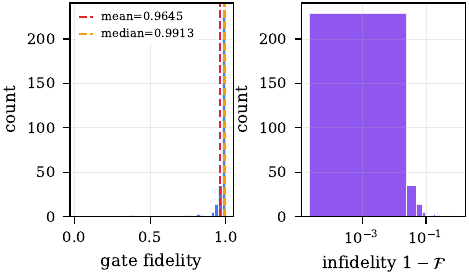}
\caption{Held-out fidelity distribution for the inverse ESN on $300$
unseen gates. The sharp peak near unit fidelity gives the high median
($0.994$), while the long low-fidelity tail accounts for the lower
mean ($0.943$); the principal challenge is the elimination of rare
but substantial failures.}
\label{fig:inv_fids}
\end{figure}

\subsubsection*{Strategy 3: Hybrid (inverse ESN warm start + \texttt{GRAPE})}
The single-pass pulses are refined with
\texttt{GRAPE}~\cite{KhanejaGRAPE}, which computes the exact gradient of
the gate fidelity by forward/backward propagation of the $2\times 2$
Schr\"odinger equation, at cost $O(KN^3)$ per iteration. We consider
two variants of the optimizer: \emph{plain} gradient ascent with a fixed
step, and an \emph{accelerated} variant that normalizes the gradient and
performs a bidirectional backtracking line search at each iteration.
Both stop as soon as $\mathcal F>0.99999$. For every gate we compare
the warm start supplied by the inverse ESN with a cold start from a
random pulse, under the same kernel, budget, and stopping rule; results
are collected in Table~\ref{tab:hybrid} and Fig.~\ref{fig:hybrid}.

Two regimes emerge. Under the accelerated kernel the single-qubit
landscape is benign enough that \emph{both} starts converge within ten
iterations (warm: $2$--$8$, median $5$; cold: $4$--$10$, median $6.5$),
with final fidelities above $0.99999$ in every case: at this system size
a strong optimizer leaves little room for a learned initializer, and
fixed-budget final fidelity is not a useful discriminator. The plain
kernel exposes the mechanism the warm start is meant to exploit:
cold-start convergence now requires $7\times10^2$--$1.8\times10^3$
iterations, and the ESN warm start reduces the iterations-to-threshold
on seven of the eight gates, lowering the median from
$\approx 1.1\times10^3$ to $\approx 5.3\times10^2$; the single exception
is $\sqrt{Y}$, whose warm start is the poorest
($\mathcal F_{\mathrm{ESN}}=0.79$, at the displaced singularity). The
meaningful comparison is therefore \emph{time to a fidelity threshold}
rather than final accuracy after a fixed budget: at one qubit the margin
is a factor of about two against plain gradient ascent and negligible
against the accelerated kernel, and it should widen with the
Hilbert-space dimension, where each iteration is costly and cold starts
are far from any basin (Sec.~\ref{sec:discussion}). The inverse ESN is
thus best viewed not as a replacement for \texttt{GRAPE}, but as a fast
learned initializer that turns a global search into a short local
polish.

In this implementation the ESN stage, including the representative
search, takes $0.55$--$0.91$\,s per gate (mean $0.68$\,s), and the
accelerated warm-started \texttt{GRAPE} stage $36$--$195$\,ms (mean
$127$\,ms), so the full pipeline remains of order one second per
gate.\footnote{Timings refer to a pure NumPy implementation on a
laptop-class CPU (i.e. a personal MacBook).}

\begin{table*}[t]
\centering
\caption{Standard-gate synthesis with the inverse ESN and its hybrid
refinement. \\ Iteration counts are to the stopping threshold $\mathcal
F>0.99999$ from the ESN warm start and from a random cold start, for the
accelerated (line-search) and plain (fixed-step) \texttt{GRAPE} kernels.}
\label{tab:hybrid}
\footnotesize
\setlength{\tabcolsep}{8pt}
\renewcommand{\arraystretch}{1.1}
\begin{tabular}{lcccccc}
\toprule
Gate & $\mathcal{F}_{\mathrm{ESN}}$
 & $\mathcal{F}_{\mathrm{hybrid}}$
 & $N_{\mathrm{warm}}$
 & $N_{\mathrm{cold}}$
 & $N^{\mathrm{plain}}_{\mathrm{warm}}$
 & $N^{\mathrm{plain}}_{\mathrm{cold}}$ \\
\midrule
Hadamard & 0.9205 & 1.000000 & 6 & 7 & 755 & 1012 \\
$X$ (NOT) & 0.9992 & 1.000000 & 4 & 9 & 366 & 1772 \\
$Y$ & 0.8078 & 0.999994 & 5 & 10 & 1413 & 1789 \\
$Z$ & 0.9921 & 0.999999 & 6 & 6 & 557 & 886 \\
$S$ (phase) & 0.9986 & 1.000000 & 2 & 7 & 413 & 702 \\
$T$ ($\pi/8$) & 0.9982 & 0.999999 & 5 & 4 & 435 & 896 \\
$\sqrt{X}$ & 0.9954 & 0.999997 & 3 & 5 & 512 & 1102 \\
$\sqrt{Y}$ & 0.7907 & 0.999995 & 8 & 5 & 2098 & 1102 \\
\bottomrule
\end{tabular}

\medskip
\begin{minipage}{\textwidth}
\footnotesize $N_{\mathrm{warm}}$ and $N_{\mathrm{cold}}$ refer to the
accelerated kernel (budget $100$), $N^{\mathrm{plain}}_{\mathrm{warm}}$
and $N^{\mathrm{plain}}_{\mathrm{cold}}$ to plain gradient ascent
(budget $3000$); all runs use the same per-gate random cold start.
$\mathcal{F}_{\mathrm{hybrid}}$ is the final fidelity of the
warm-started accelerated run; entries displayed as $1.000000$ correspond
to infidelities below $10^{-6}$. The warm start lowers the plain-kernel
median from $\approx 1057$ to $\approx 535$ iterations, winning on seven
of the eight gates; under the accelerated kernel both starts converge
within ten iterations, so the fair comparison is the time to first cross
a chosen fidelity threshold, not the final fidelity.
\end{minipage}
\end{table*}

The ZYZ singularity that limits the pure inverse ESN is a coordinate
artifact, not a physical obstruction. Beyond switching charts (e.g.\
ZXZ when $\xi_2$ is near $0$ or $\pi$), a two-chart atlas, or a
quaternion-based reconstruction, the reference frame $\Phi(\bm{\xi}_0)$
itself offers a handle: since the frame displaces the singular set, a
per-gate frame choice can place any individual target in the
well-conditioned region. We leave a systematic treatment to future
work. 

\begin{figure}[!t]
\centering
\includegraphics[width=\columnwidth]{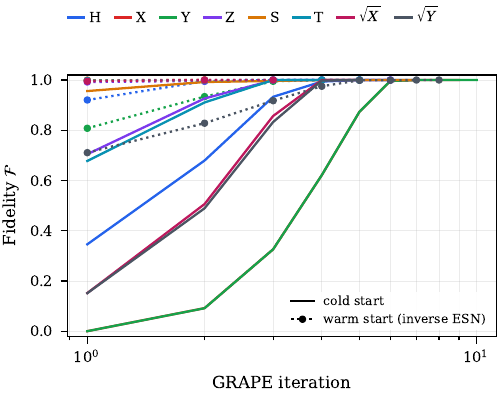}
\caption{Warm- versus cold-start convergence of the accelerated
\texttt{GRAPE} kernel on the standard eight-gate suite: gate fidelity
versus iteration (logarithmic axis), with solid lines for the cold start
and dotted lines with markers for the warm start supplied by the inverse
ESN; every run stops at $\mathcal F>0.99999$. Both starts converge
within ten iterations at one qubit; the warm start enters at its
single-pass fidelity (Table~\ref{tab:hybrid}) and lowers the median
iteration count from $6.5$ to $5$. The corresponding comparison for
plain gradient ascent (last two columns of Table~\ref{tab:hybrid}) shows
cold-start convergence at $700$--$1800$ iterations, roughly halved by
the warm start on seven of the eight gates.}
\label{fig:hybrid}
\end{figure}

\section{Discussion}
\label{sec:discussion}

In this work, we introduced a structure-preserving framework for quantum gate synthesis that combines the Wei--Norman decomposition with reservoir computing. Rather than learning the propagator directly, the proposed method learns the evolution of a finite set of real-valued coordinates, from which the propagator is reconstructed analytically. This separation gives the two components of the framework distinct and complementary roles: the reservoir captures the nonlinear, history-dependent response of the system to the control fields, while the Wei--Norman reconstruction enforces the underlying Lie-group structure exactly. Consequently, errors in the predicted coordinates may reduce the resulting gate fidelity, but they cannot produce a nonunitary evolution.

This property is more than a formal convenience. Direct machine-learning models of quantum dynamics must generally learn both the system evolution and the constraints defining the physically admissible output space. In the present construction, the physical validity is imposed by design rather than inferred from data. Moreover, the theoretical bounds derived above make this division of labor explicit: under standard regularity assumptions, the error in the reconstructed propagator is controlled by the approximation error of the coordinate surrogate.

The present study also opens several directions for future research, the most pressing of which is extending the demonstration from the single qubit to gates of higher order. The Wei--Norman chart for $q$ qubits involves $n=4^q-1$ coordinates, so already the two- and three-qubit cases---the natural next targets, encompassing entangling gates such as \texttt{CNOT}, \texttt{CZ}, and \texttt{Toffoli}---bring the coordinate dimension from $3$ to $15$ and $63$. The existence of a convenient global factorization is nontrivial for these higher-rank algebras; a realistic route combines subsystem decompositions of Cartan (KAK) type with Wei--Norman charts on the factors, an avenue we intend to pursue directly. This is also the regime in which established optimizers become genuinely expensive, and in which a learned warm start should outperform a cold-started optimizer by a measurable margin in time-to-threshold, widening the factor-of-two advantage observed against plain gradient ascent at one qubit in Sec.~\ref{sec:results}. As the number of coordinates grows, the question of whether physical structure---sparse interactions, symmetries, or a hierarchical gate decomposition---can be exploited to reduce the effective dimension of both the coordinate representation and the reservoir becomes increasingly important, and we expect the two- and three-qubit cases to be the setting in which this question is first answered concretely.

\section*{Data availability}
The code and data reproducing all experiments are available at
\url{https://doi.org/10.5281/zenodo.21377857}. 

\section*{Acknowledgments}
FC thanks F.~Anzà for useful discussions and introducing the authors to the Wei--Norman formula. All the authors thank D. De Santis for comments on an initial version of the draft.
FC's work was conducted under the auspices of the National Nuclear Security Administration of the United States Department of Energy at Los Alamos National Laboratory (LANL) under Contract No. DE-AC52-06NA25396. FC is currently an employee of Planckian srl, but acknowledges funding from LDRD (project 20240245ER).
RM and AS acknowledge partial support from the Quantum Summer School Program 2025 at LANL's Theoretical Division. 
AS acknowledges the Spanish State Research Agency, through the Mar\'ia de Maeztu project CEX2021-001164-M funded by the MICIU/AEI/10.13039/501100011033, through the COQUSY project PID2022-140506NB-C21 and -C22 funded by MICIU/AEI/10.13039/501100011033, MINECO through the QUANTUM SPAIN project, and EU through the RTRP - NextGenerationEU within the framework of the Digital Spain 2025 Agenda. A.S. also acknowledges the CSIC Interdisciplinary Thematic Platform (PTI+) on Quantum Technologies in Spain (QTEP+) and the support of a fellowship from the ``la Caixa” Foundation (ID 100010434 - LCF/BQ/DI23/11990081). A. S. also acknowledges support from the U.S. Department of Energy (DOE) through a quantum computing program sponsored by the Los Alamos National Laboratory (LANL) Information Science $\&$ Technology Institute.
AS has also been supported by the USRA Feynman Quantum Academy internship program. 

\appendix 
\section{Hyperparameter sweep data}
\label{app:sweep}

This appendix collects the hyperparameter sweeps summarized in
Sec.~\ref{sec:forward}. All sweeps are on $SO(3)$ with $150$ training
trajectories and the baseline configuration $R=1000$, $\rho=0.95$,
$\alpha=0.1$, $\sigma_{\mathrm{in}}=1.0$, $\lambda=10^{-8}$. Table~\ref{tab:sweeps} reports the one-dimensional scans; the input scaling,
the most sensitive parameter, is shown in Fig.~\ref{fig:sweep_is}.

\begin{table}[h]
\centering
\caption{One-dimensional hyperparameter sweeps.}
\label{tab:sweeps}
\smallskip
\begin{tabular}{llcc}
\toprule
Parameter & Value & Mean $\|\bm{\xi}-\hat{\bm{\xi}}\|$
 & Mean $\mathcal{F}$ \\
\midrule
\multirow{5}{*}{$R$}
 & 100 & $1.63\times10^{-1}$ & 0.994 \\
 & 200 & $2.10\times10^{-1}$ & 0.989 \\
 & 500 & $2.41\times10^{-1}$ & 0.983 \\
 & 1000 & $4.96\times10^{-1}$ & 0.929 \\
 & 2000 & $7.22\times10^{-1}$ & 0.840 \\
\midrule
\multirow{4}{*}{$\alpha$}
 & 0.01 & $1.15\times10^{-1}$ & 0.994 \\
 & 0.05 & $3.34\times10^{-1}$ & 0.951 \\
 & 0.1 & $4.96\times10^{-1}$ & 0.929 \\
 & 0.3 & $9.19\times10^{-1}$ & 0.773 \\
\midrule
\multirow{4}{*}{$\rho$}
 & 0.5 & $4.14\times10^{-1}$ & 0.953 \\
 & 0.8 & $3.31\times10^{-1}$ & 0.970 \\
 & 0.95 & $4.96\times10^{-1}$ & 0.929 \\
 & 0.99 & $4.34\times10^{-1}$ & 0.942 \\
\midrule
\multirow{4}{*}{$\sigma_{\mathrm{in}}$}
 & 0.1 & $6.27\times10^{-2}$ & 0.999 \\
 & 0.5 & $2.13\times10^{-1}$ & 0.978 \\
 & 1.0 & $4.96\times10^{-1}$ & 0.929 \\
 & 3.0 & $9.95\times10^{-1}$ & 0.833 \\
\midrule
\multirow{4}{*}{$N_{\mathrm{train}}$}
 & 10 & $4.84\times10^{-1}$ & 0.955 \\
 & 50 & $6.12\times10^{-1}$ & 0.862 \\
 & 100 & $1.81\times10^{-1}$ & 0.987 \\
 & 200 & $1.53\times10^{-1}$ & 0.989 \\
\bottomrule
\end{tabular}
\end{table}

\begin{figure}[!t]
\centering
\includegraphics[width=\columnwidth]{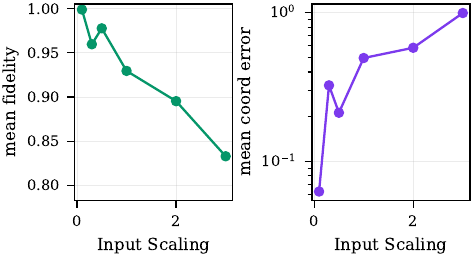}
\caption{Fidelity and coordinate error versus input scaling
$\sigma_{\mathrm{in}}$, the most sensitive hyperparameter.}
\label{fig:sweep_is}
\end{figure}

\section{Extended statistics for the inverse ESN}
\label{app:invstats}

This appendix records the quantitative output of the inverse-ESN
experiment of Sec.~\ref{sec:synthesis} beyond the summary in the main
text.

\emph{Training-set coverage.} The $8000$ pulse--gate pairs were
generated in $61$\,s. The coverage in Wei--Norman coordinates is
broad: $\xi_1\in[-8.35,5.80]$ (s.d.\ $2.09$), $\xi_2\in[0.15,2.99]$
(s.d.\ $0.65$), $\xi_3\in[-8.45,8.12]$ (s.d.\ $2.97$). The nearest generated
example to the canonical representative of the Hadamard target lies at
Euclidean distance $0.2309$ in coordinate space, so the training set
samples the neighborhood of the principal benchmark gate without
memorizing it. The edges $\xi_2=0.15$ and $\xi_2=\pi-0.15$ are not
incidental: the coordinate is clipped to $[0.15,\pi-0.15]$ during
generation, so the training set avoids the singular set by
construction, and the single-pass losses on the frame-displaced
boundary targets (the Hadamard and $\sqrt{Y}$ in the main text) combine
the chart ill-conditioning with this deliberate lack of coverage. 

\emph{Regression quality.} Training the readout on the $8000$ samples
yields a Fourier-coefficient RMSE of $2.4423\times10^{-1}$ (per-pair
median $1.8735\times10^{-1}$, maximum $4.9454\times10^{-1}$), in a
single ridge-regression solve; for scale, the coefficients are sampled
with standard deviation $\mathrm{amp}/\sqrt{n_{\mathrm{modes}}+1}=0.6$
before mode tapering. More informative is the induced gate quality
after reconstruction and integration: on $50$ randomly selected training
pairs the mean fidelity is $0.9844$ (median $0.9929$, minimum $0.8498$).
Coefficient-level error is thus compatible with high gate fidelity after
reconstruction.

\emph{Held-out generalization.} Table~\ref{tab:app_general_test}
expands the held-out statistics quoted in the main text. The
distribution is strongly skewed toward high fidelity; the minimum
fidelity $3.2\times10^{-3}$ shows that catastrophic failures remain
possible but rare compared with the dominant high-fidelity bulk
(Fig.~\ref{fig:inv_fids}).

\begin{table}[h]
\centering
\caption{Held-out performance of the inverse ESN \\ on $300$ unseen
targets.}
\label{tab:app_general_test}
\smallskip
\begin{tabular}{lc}
\toprule
Statistic & Value \\
\midrule
Mean fidelity & 0.943072 \\
Median fidelity & 0.993538 \\
Minimum fidelity & 0.003218 \\
Maximum fidelity & 0.999892 \\
Mean infidelity & $5.6928\times10^{-2}$ \\
Fraction with $\mathcal{F}>0.99$ & 63.0\% \\
Fraction with $\mathcal{F}>0.95$ & 83.0\% \\
Fraction with $\mathcal{F}>0.90$ & 90.7\% \\
Fraction with $\mathcal{F}>0.80$ & 93.7\% \\
\bottomrule
\end{tabular}
\end{table}

\bibliographystyle{IEEEtran}
\bibliography{biblio}
\end{document}